\theoremstyle{plain}
\newtheorem{defin}{Definition}
\theoremstyle{plain}
\theoremstyle{plain}
\theoremstyle{plain}
\theoremstyle{plain}
\newtheorem{fact}{Fact}
\theoremstyle{plain}
\theoremstyle{plain}
\newtheorem{lemma}{Lemma}
\theoremstyle{plain}
\theoremstyle{plain}
\theoremstyle{remark}
\newtheorem{remark}{Remark}
\theoremstyle{discussion}
\theoremstyle{plain}
\DeclareMathOperator*{\argmax}{arg\,max} 
\definecolor{light-gray}{gray}{0.85}
\begin{document}
\title{Strategies for High-Throughput FPGA-based QC-LDPC Decoder Architecture}
\author
{\IEEEauthorblockN{Swapnil Mhaske\IEEEauthorrefmark{1}, Hojin Kee\IEEEauthorrefmark{2}, Tai Ly\IEEEauthorrefmark{2}, Ahsan Aziz\IEEEauthorrefmark{2}, Predrag Spasojevic\IEEEauthorrefmark{1}}
\IEEEauthorblockA{\IEEEauthorrefmark{1}Wireless Information Network Laboratory, Rutgers University, New Brunswick, NJ, USA \\ Email:\{swapnil.mhaske@rutgers.edu, spasojev@winlab.rutgers.edu\}}
\IEEEauthorblockA{\IEEEauthorrefmark{2}National Instruments Corporation, Austin, TX, USA \\ Email:\{hojin.kee, tai.ly, ahsan.aziz\}@ni.com}}
\maketitle

\begin{abstract}
We propose without loss of generality strategies to achieve a high-throughput FPGA-based architecture for a QC-LDPC code based on a circulant-1 identity matrix construction. We present a novel representation of the parity-check matrix (PCM) providing a multi-fold throughput gain. Splitting of the node processing algorithm enables us to achieve pipelining of blocks and hence layers. By partitioning the PCM into not only layers but superlayers we derive an upper bound on the pipelining depth for the compact representation. To validate the architecture, a decoder for the \emph{IEEE 802.11n (2012)} \cite{std80211n2012} QC-LDPC is implemented on the \emph{Xilinx Kintex-7} FPGA with the help of the \emph{FPGA IP} compiler \cite{algcmp} available in the \emph{NI LabVIEW\texttrademark Communication System Design Suite (CSDS\texttrademark)} which offers an automated and systematic compilation flow where an optimized hardware implementation from the LDPC algorithm was generated in approximately 3 minutes, achieving an overall throughput of 608Mb/s (at 260MHz). As per our knowledge this is the fastest implementation of the \emph{IEEE 802.11n} QC-LDPC decoder using an algorithmic compiler.
\end{abstract}

\begin{IEEEkeywords}
5G, mm-wave, QC-LDPC, Belief Propagation (BP) decoding, MSA, layered decoding, high-level synthesis (HLS), FPGA, IEEE 802.11n.
\end{IEEEkeywords}

\section{Introduction}
\label{sec:intro}
For the next generation of wireless technology collectively termed as Beyond-4G and 5G (hereafter referred to as 5G), peak data rates of upto ten Gb/s with overall latency less than $1$ms \cite{5g_nsn_cudak} are envisioned. However, due to the proposed operation in the 30-300GHz range with challenges such as short range of communication, increasing shadowing and rapid fading in time, the processing complexity of the system is expected to be high. In an effort to design and develop a channel coding solution suitable to such systems, in this paper, we present a high-throughput, scalable and reconfigurable FPGA decoder architecture. \\ 
\indent It is well known that the structure offered by QC-LDPC codes \cite{ecc_shulin} makes them amenable to time and space efficient decoder implementations relative to random LDPC codes. We believe that, given the primary requirements of high decoding throughput, QC-LDPC codes or their variants (such as accumulator-based codes \cite{cc_ryan}) that can be decoded using belief propagation (BP) methods are highly likely candidates for 5G systems. Thus, for the sole purpose of validating the proposed architecture, we chose a standard compliant code, with a throughput performance that well surpasses the requirement of the chosen standard.\\
\indent Insightful work on high-throughput (order of Gb/s) BP-based QC-LDPC decoders is available, however, most of such works focus on an ASIC design \cite{laydecarch_sun}, \cite{htqcldpcdec_zhang} which usually requires intricate customizations at the RTL level and expert knowledge of VLSI design. A sizeable subset of which caters to fully-parallel \cite{fpar_onizawa} or code-specific \cite{lcmpdec_mohsenin} architectures. From the point of view of an evolving research solution this is not an attractive option for rapid-prototyping. In the relatively less explored area of FPGA-based implementation, impressive results have recently been presented in works such as \cite{htfpga_balatsoukas},\cite{2bitmsa_chandrasetty} and \cite{mgbpsfpga_wilson}. However, these are based on fully-parallel architectures which lack flexibility (code specific) and are limited to small block sizes (primarily due to the inhibiting routing congestion) as discussed in the informative overview in \cite{gbpsdecovw_schlafer}. Since our case study is based on fully-automated generation of the HDL, a fair comparison is done with another state-of-the-art implementation \cite{vivadoldpc} in Section \ref{sec:casestudy}. Moreover, in this paper, we provide without loss of generality, strategies to achieve a high-throughput FPGA-based architecture for a QC-LDPC code based on a circulant-1 identity matrix construction. \\
\indent The main contribution of this brief is a compact representation (matrix form) of the PCM of the QC-LDPC code which provides a multi-fold increase in throughput. In spite of the resulting reduction in the degrees of freedom for pipelined processing, we achieve efficient pipelining of two-layers and also provide without loss of generality an upper bound on the pipelining depth that can be achieved in this manner. The splitting of the node processing allows us to achieve the said degree of pipelining without utilizing additional hardware resources. The algorithmic strategies were realized in hardware for our case study by the \emph{FPGA IP} \cite{algcmp} compiler in \emph{LabVIEW\texttrademark \,CSDS\texttrademark} which translated the entire software-pieplined high-level language description into VHDL in approximately 3 minutes enabling state-of-the-art rapid-prototyping. We have also demonstrated the scalability of the proposed architecture in an application that achieves over 2Gb/s of throughput \cite{ht_impl}. \\
\indent The remainder of this paper is organized as follows. Section \ref{sec:qcldpc} describes the QC-LDPC codes and the decoding algorithm chosen for this implementation. The strategies for achieving high throughput are explained in Section \ref{sec:techforht}. The case study is discussed in Section \ref{sec:casestudy}, and we conclude with Section \ref{sec:conc}.

\section{Quasi-Cyclic LDPC Codes and Decoding}
\label{sec:qcldpc}
LDPC codes (due to R. Gallager \cite{ldpc_gallager}) are a class of linear block codes that have been shown to achieve near-capacity performance on a broad range of channels and are characterized by a low-density (sparse) PCM representation. 

\begin{figure}
\centering
\includegraphics[scale=0.75]{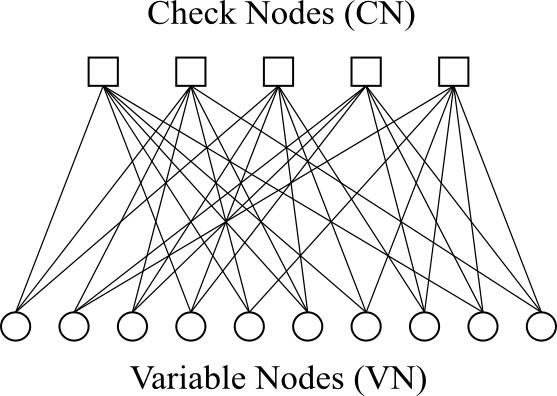}
\caption{A Tanner graph where VNs (representing the code bits) are shown as circles and CNs (representing the parity-check equations) are shown as squares. Each edge in the graph corresponds to a non-zero entry ($1$ for binary LDPC codes) in the PCM $\mathbf{H}$.}
\label{fig:tanner}
\end{figure}
Mathematically, an LDPC code is a null-space of its $m \times n$ PCM $\mathbf{H}$, where $m$ denotes the number of parity-check equations or parity-bits and $n$ denotes the number of variable nodes or code bits \cite{ecc_shulin}. In other words, for a rank $m$ PCM $\mathbf{H}$, $m$ is the number of redundant bits added to the $k$ information bits, which together form the codeword of length $n=k+m$. In the Tanner graph representation (due to Tanner \cite{ldpc_tanner}), $\mathbf{H}$ is the incidence matrix of a bipartite graph comprising of two sets: the check node (CN) set of $m$ parity-check equations and the variable node (VN) set of $n$ variable or bit nodes; the $i^{th}$ CN is connected to the $j^{th}$ VN if $\mathbf{H}(i,j)=1$, $1 \leq i \leq m$ and $1 \leq j \leq n$. A toy example of a Tanner graph is shown in Fig. \ref{fig:tanner}. The degree $d_{c_i}$ ($d_{v_j}$) of a CN $i$ (VN $j$) is equal to the number of $1$s along the $i^{th}$ row ($j^{th}$ column) of $\mathbf{H}$. For constants $c_c, c_v \in \mathbb{Z}_{>0}$ and $c_c<<m, c_v<<n$, if $\forall i,j$, $d_{c_i}=c_c$ and $d_{v_j}=c_v$, then the LDPC code is called as a regular code and is called an irregular code otherwise.

\subsection{Quasi-Cyclic LDPC Codes}
\label{subsec:qcldpc}
The first LDPC codes by Gallager are random, which complicate the decoder implementation, mainly because a random interconnect pattern between the VNs and CNs directly translates to a complex wire routing circuit on hardware. QC-LDPC codes belong to the class of structured codes that are relatively easier to implement without significantly compromising performance. \\
The construction of identity matrix based QC-LDPC codes relies on an $m_b  \times n_b$ matrix $\mathbf{H}_b$ sometimes called as the \emph{base matrix} which comprises of cyclically right-shifted identity and zero submatrices both of size $z \times z$ where, $z \in \mathbb{Z^+}, 1 \leq i_b \leq (m_b)$ and $1 \leq j_b \leq (n_b)$, the shift value,
\begin{align*}
s = \mathbf{H}_b(i_b,j_b) \in \mathcal{S} = \{-1\} \cup \{0, \ldots z-1\}
\end{align*}
The PCM matrix $\mathbf{H}$ is obtained by \emph{expanding} $\mathbf{H}_b$ using the mapping,
\begin{align*}
s \longrightarrow \left\{
  \begin{array}{lr}
  \mathbf{I}_s, &s \in \mathcal{S} \backslash \{-1\}\\
  \mathbf{0}, &s \in \{-1\}
  \end{array}
\right.
\end{align*}
where, $\mathbf{I}_s$ is an identity matrix of size $z$ which is cyclically right-shifted by $s=\mathbf{H}_b(i_b,j_b)$ and $\mathbf{0}$ is the all-zero matrix of size $z \times z$. As $\mathbf{H}$ is composed of the submatrices $\mathbf{I}_s$ and $\mathbf{0}$, it has $m=m_b.z$ rows and $n=n_b.z$ columns. 
$\mathbf{H}$ for the \emph{IEEE 802.11n (2012)} standard \cite{std80211n2012} (used for our case study) with $z=81$ is shown in Table \ref{tab:hb}.

\begin{table*}[htp]
\centering
\setlength{\tabcolsep}{0.5pt}
\begin{tabular}{l@{\hspace{0pt}}*{24}{c}}
\bfseries \bf{Layers} $\downarrow$ \quad & \multicolumn{24}{c}{\bfseries Blocks $\longrightarrow$} \\
\cmidrule(l){2-25}
&$\mathbf{B_{1}}$ &$\mathbf{B_{2}}$ &$\mathbf{B_{3}}$ &$\mathbf{B_{4}}$ &$\mathbf{B_{5}}$ &$\mathbf{B_{6}}$ &$\mathbf{B_{7}}$ &$\mathbf{B_{8}}$ &$\mathbf{B_{9}}$ &$\mathbf{B_{10}}$ &$\mathbf{B_{11}}$ &$\mathbf{B_{12}}$ &$\mathbf{B_{13}}$ &$\mathbf{B_{14}}$ &$\mathbf{B_{15}}$ &$\mathbf{B_{16}}$ &$\mathbf{B_{17}}$ &$\mathbf{B_{18}}$ &$\mathbf{B_{19}}$ &$\mathbf{B_{20}}$ &$\mathbf{B_{21}}$ &$\mathbf{B_{22}}$ &$\mathbf{B_{23}}$ &$\mathbf{B_{24}}$ \\
\midrule
\bfseries $\mathbf{L_{1}}$
&\colorbox{light-gray}{57} &-1 &-1 &-1 &\colorbox{light-gray}{50} &-1 &\colorbox{light-gray}{11} &-1 &\colorbox{light-gray}{50} &-1 &\colorbox{light-gray}{79} &-1 &\colorbox{light-gray}{1} &\colorbox{light-gray}{0} &-1 &-1 &-1 &-1 &-1 &-1 &-1 &-1 &-1 &-1\\
\midrule 
\bfseries $\mathbf{L_{2}}$
&\colorbox{light-gray}{3} &-1 &\colorbox{light-gray}{28} &-1 &\colorbox{light-gray}{0} &-1 &-1 &-1 &\colorbox{light-gray}{55} &\colorbox{light-gray}{7} &-1 &-1 &-1 &\colorbox{light-gray}{0} &\colorbox{light-gray}{0} &-1 &-1 &-1 &-1 &-1 &-1 &-1 &-1 &-1\\
\midrule
\bfseries $\mathbf{L_{3}}$
&\colorbox{light-gray}{30} &-1 &-1 &-1 &\colorbox{light-gray}{24} &\colorbox{light-gray}{37} &-1 &-1 &\colorbox{light-gray}{56} &\colorbox{light-gray}{14} &-1 &-1 &-1 &-1 &\colorbox{light-gray}{0} &\colorbox{light-gray}{0} &-1 &-1 &-1 &-1 &-1 &-1 &-1 &-1 \\
\midrule
\bfseries $\mathbf{L_{4}}$
&\colorbox{light-gray}{62} &\colorbox{light-gray}{53} &-1 &-1 &\colorbox{light-gray}{53} &-1 &-1 &\colorbox{light-gray}{3} &\colorbox{light-gray}{35} &-1 &-1 &-1 &-1 &-1 &-1 &\colorbox{light-gray}{0} &\colorbox{light-gray}{0} &-1 &-1 &-1 &-1 &-1 &-1 &-1 \\
\midrule
\bfseries $\mathbf{L_{5}}$
&\colorbox{light-gray}{40} &-1 &-1 &\colorbox{light-gray}{20} &\colorbox{light-gray}{66} &-1 &-1 &\colorbox{light-gray}{22} &\colorbox{light-gray}{28} &-1 &-1 &-1 &-1 &-1 &-1 &-1 &\colorbox{light-gray}{0} &\colorbox{light-gray}{0} &-1 &-1 &-1 &-1 &-1 &-1\\
\midrule
\bfseries $\mathbf{L_{6}}$
&\colorbox{light-gray}{0} &-1 &-1 &-1 &\colorbox{light-gray}{8} &-1 &\colorbox{light-gray}{42} &-1 &\colorbox{light-gray}{50} &-1 &-1 &\colorbox{light-gray}{8} &-1 &-1 &-1 &-1 &-1 &\colorbox{light-gray}{0} &\colorbox{light-gray}{0} &-1 &-1 &-1 &-1 &-1\\
\midrule
\bfseries $\mathbf{L_{7}}$
&\colorbox{light-gray}{69} &\colorbox{light-gray}{79} &\colorbox{light-gray}{79} &-1 &-1 &-1 &\colorbox{light-gray}{56} &-1 &\colorbox{light-gray}{52} &-1 &-1 &-1 &\colorbox{light-gray}{0} &-1 &-1 &-1 &-1 &-1 &\colorbox{light-gray}{0} &\colorbox{light-gray}{0} &-1 &-1 &-1 &-1\\
\midrule
\bfseries $\mathbf{L_{8}}$
&\colorbox{light-gray}{65} &-1 &-1 &-1 &\colorbox{light-gray}{38} &\colorbox{light-gray}{57} &-1 &-1 &\colorbox{light-gray}{72} &-1 &\colorbox{light-gray}{27} &-1 &-1 &-1 &-1 &-1 &-1 &-1 &-1 &\colorbox{light-gray}{0} &\colorbox{light-gray}{0} &-1 &-1 &-1\\
\midrule
\bfseries $\mathbf{L_{9}}$
&\colorbox{light-gray}{64} &-1 &-1 &-1 &\colorbox{light-gray}{14} &\colorbox{light-gray}{52} &-1 &-1 &\colorbox{light-gray}{30} &-1 &-1 &\colorbox{light-gray}{32} &-1 &-1 &-1 &-1 &-1 &-1 &-1 &-1 &\colorbox{light-gray}{0} &\colorbox{light-gray}{0} &-1 &-1\\
\midrule
\bfseries $\mathbf{L_{10}}$
&-1 &\colorbox{light-gray}{45} &-1 &\colorbox{light-gray}{70} &\colorbox{light-gray}{0} &-1 &-1 &-1 &\colorbox{light-gray}{77} &\colorbox{light-gray}{9} &-1 &-1 &-1 &-1 &-1 &-1 &-1 &-1 &-1 &-1 &-1 &\colorbox{light-gray}{0} &\colorbox{light-gray}{0} &-1\\
\midrule
\bfseries $\mathbf{L_{11}}$
&\colorbox{light-gray}{2} &\colorbox{light-gray}{56} &-1 &\colorbox{light-gray}{57} &\colorbox{light-gray}{35} &-1 &-1 &-1 &-1 &-1 &\colorbox{light-gray}{12} &-1 &-1 &-1 &-1 &-1 &-1 &-1 &-1 &-1 &-1 &-1 &\colorbox{light-gray}{0} &\colorbox{light-gray}{0}\\
\midrule
\bfseries $\mathbf{L_{12}}$
&\colorbox{light-gray}{24} &-1 &\colorbox{light-gray}{61} &-1 &\colorbox{light-gray}{60} &-1 &-1 &\colorbox{light-gray}{27} &\colorbox{light-gray}{51} &-1 &-1 &\colorbox{light-gray}{16} &\colorbox{light-gray}{1} &-1 &-1 &-1 &-1 &-1 &-1 &-1 &-1 &-1 &-1 &\colorbox{light-gray}{0} \\
\bottomrule
\addlinespace
\end{tabular}

\caption{Base matrix $\mathbf{H}_b$ for $z=81$ specified in IEEE 802.11n (2012) standard used in the case study. $L_1-L_{12}$ are the layers and $B_1-B_{24}$ are the block columns (see Section \ref{subsec:layered}). Valid blocks (see section \ref{subsec:beta}) are highlighted.}
\label{tab:hb}
\end{table*}

\subsection{Scaled Min-Sum Algorithm for Decoding QC-LDPC Codes}
\label{subsec:msa}
LDPC codes can be decoded using message passing (MP) or belief propagation (BP) \cite{ldpc_gallager,spa_factorgraphs} on the bipartite Tanner graph where, the CNs and VNs communicate with each other, successively passing revised estimates of the log-likelihood ratio (LLR) associated, in every decoding iteration. In this work we have employed the efficient decoding algorithm presented in \cite{serialmp_litsyn}, with pipelined processing of layers based on the row-layered decoding technique \cite{laydec}, detailed in Section \ref{subsec:layered}.
\begin{defin}
For $1 \leq i \leq m$ and $1 \leq j \leq n$, let $v_j$ denote the $j^{th}$ bit in the length $n$ codeword and $y_j=v_j+n_j$ denote the corresponding received value from the channel corrupted by the noise sample $n_j$. Let the variable-to-check (VTC) message from VN $j$ to CN $i$ be $q_{ij}$ and, let the check-to-variable (CTV) message from CN $i$ to VN $j$ be $r_{ij}$. Let the a posteriori probability (APP) ratio for VN $j$ be denoted as $p_j$.
\end{defin}
\noindent The steps of the scaled-MSA \cite{ecc_shulin},\cite{smsa_chen} are given below.
\begin{enumerate}
\item Initialize the APP ratio and the CTV messages as, 
\begin{align} \label{eq:initllr}
p_j^{(0)} &= ln \left\lbrace \frac{P(v_j=0|y_j)}{P(v_j=1|y_j)} \right\rbrace, \quad 1 \leq j \leq n \\
r_{ij}^{(0)} &= 0, \quad 1 \leq i \leq m, 1 \leq j \leq n \nonumber
\end{align}
\item Iteratively compute at the $t^{th}$ decoding iteration,
\begin{align} \label{eq:vnmsg}
q_{ij}^{(t)} &= p_{j}^{(t-1)}-r_{ij}^{(t-1)}
\end{align}
\begin{align} \label{eq:cnmsg}
r_{ij}^{(t)} &= a \cdot \prod_{k \in \mathcal{N}(i)\backslash\{j\}} sign \left(q_{ik}^{(t)} \right) \cdot \min_{k \in \mathcal{N}(i)\backslash\{j\}} \left\lbrace |q_{ik}^{(t)}| \right\rbrace
\end{align}
\begin{align} \label{eq:app}
p_j^{(t)} &= q_{ij}^{(t)} + r_{ij}^{(t)}
\end{align}
where, $1 \leq i \leq m$, and $k \in \mathcal{N}(i)\backslash\{j\}$ represents the set of the VN neighbors of CN $i$ excluding VN $j$. Let $t_{max}$ be the maximum number of decoding iterations.
\item Decision on the code bit $v_j$, $1 \leq j \leq n$ as,
\begin{align}
\hat{v}_j = \left\{
  \begin{array}{lr}
  0, &p_j < 0 \\
  1, &otherwise
  \end{array}
\right.
\end{align}
\item If $\hat{\mathbf{v}}\mathbf{H}^{T}=0$, where $\hat{\mathbf{v}}=(\hat{v}_1,\hat{v}_2,\ldots,\hat{v}_n)$, or $t=t_{max}$, declare $\hat{\mathbf{v}}$ as the decoded codeword.
\end{enumerate}

It is well known that since the MSA is an approximation of the SPA, the performance of the MSA is relatively worse than the SPA \cite{ecc_shulin}. However, in \cite{smsa_chen} it has been shown that scaling the CTV messages $r_{ij}$ can improve the performance of the MSA. Hence, we scale the CTV messages by a factor $a$ (=$0.75$).
\begin{remark} \label{rem:algoadv}
\indent The standard MP algorithm is based on the so-called \emph{flooding} or \emph{two-phase} schedule where, each decoding iteration comprises of two phases. In the first phase, VTC messages for all the VNs are computed and, in the second phase the CTV messages for all the CNs are computed, strictly in that order. Thus, message updates from one side of the graph propagate to the other side only in the next decoding iteration. In the algorithm given in \cite{serialmp_litsyn} however, message updates can propagate across the graph in the same decoding iteration. This provides advantages \cite{serialmp_litsyn} such as, a single processing unit is required for both CN and VN message updates, memory storage is reduced on account of the on-the-fly computation of the VTC messages $q_{ij}$ and the algorithm converges faster than the standard MP flooding schedule requiring fewer decoding iterations.
\end{remark}

\section{Techniques for High-throughput}
\label{sec:techforht}
To understand the high-throughput requirements for LDPC decoding, let us first define the decoding throughput $T$ of an iterative LDPC decoder.
\begin{defin} \label{def:thruput}
Let $F_{c}$ be the clock frequency, $n$ be the code length, $N_{i}$ be the number of decoding iterations and $N_{c}$ be the number of clock cycles per decoding iteration, then the throughput of the decoder is given by, $T=\frac{F_{c} \cdot n}{N_{i} \cdot N_{c}} \quad \text{b/s}$
\end{defin}
Even though, $n$ and $N_i$ are functions of the code and the decoding algorithm used, $F_c$ and $N_c$ are determined by the hardware architecture. Architectural optimization such as the ability to operate the decoder at higher clock rates with minimal latency between decoding iterations can help achieve higher throughput. We have employed the following techniques to increase the throughput given by Definition \ref{def:thruput}.

\subsection{Linear Complexity Node Processing}
\label{subsec:npu}
As noted in Section \ref{subsec:msa}, separate processing units for CNs and VNs are not required unlike that for the flooding schedule. The hardware elements that process equations (\ref{eq:vnmsg})-(\ref{eq:app}) are collectively referred to as the Node Processing Unit (NPU). \\
\indent Careful observation reveals that, among equations (\ref{eq:vnmsg})-(\ref{eq:app}), processing the CTV messages $r_{ij}$, $1 \leq i \leq m$ and $1 \leq j \leq n$ is the most computationally intensive due to the calculation of the sign, and the minimum value operations. The complexity of processing the minimum value is $\mathcal{O}(d_{c_i}^2)$.
In software, this translates to two nested for-loops, an outer loop that executes $d_{c_i}$ times and an inner loop that executes $(d_{c_i}-1)$ times. \\
\indent To achieve linear complexity $\mathcal{O}(d_{c_i})$ for the minimum value computation, we split the process into two phases or passes: the \emph{global} pass where the first and the second minimum (the smallest value in the set excluding the minimum value of the set) for all the neighboring VNs of a CN are computed and the \emph{local} pass where the first and second minimum from the global pass are used to compute the minimum value for each neighboring VN.  Based on the functionality of the two passes, the NPU is divided into the Global NPU (GNPU) and the Local NPU (LNPU). The algorithm is given below.
\begin{enumerate}
\item \emph{Global Pass}:
\begin{enumerate}
\item[i.] \emph{Initialization}: Let $\ell$ denote the discrete time-steps such that, $\ell \in \{0\}\cup \{1,2,\ldots,|\mathcal{N}(i)|\}$ and let $f^{(\ell)}$ and $s^{(\ell)}$ denote the value of the first and the second minimum at time $\ell$ respectively. The initial value at time $\ell=0$ is,
\begin{align}
f^{(0)}=s^{(0)}=\infty.
\end{align}

\item[ii.] \emph{Comparison}: Let $k_i(\ell) \in \mathcal{N}(i)$, $\ell = \{1,2,\ldots,|\mathcal{N}(i)|\}$, denote the index of the $\ell^{th}$ neighboring VN of CN $i$. Note that, $k_i(\ell)$ depends on $i$ and $\ell$, specifically, for a given CN $i$ it is a bijective function of $\ell$. An increment from $(\ell-1)$ to $\ell$ corresponds to moving from the edge CN $i$ $\leftrightarrow$ VN $k_i(\ell-1)$ to the edge CN $i$ $\leftrightarrow$ VN $k_i(\ell)$. \\
\begin{align}
f^{(\ell)} &= \left\{
\begin{array}{ll}
      |q_{ik_i(\ell)}|,	& |q_{ik_i(\ell)}| \leq f^{(\ell-1)} \\
      f^{(\ell-1)}, 		& otherwise. \\
\end{array}
\right. \\
s^{(\ell)} &= \left\{
\begin{array}{ll}
      |q_{ik_i(\ell)}|,	& f^{(\ell-1)} < |q_{ik_i(\ell)}| < s^{(\ell-1)} \\
      f^{(\ell-1)}, 		& |q_{ik_i(\ell)}| \leq f^{(\ell-1)} \\
      s^{(\ell-1)},		& otherwise.
\end{array}
\right.
\end{align}
\end{enumerate}
Thus, $f^{(\ell_{max})}$ and $s^{(\ell_{max})}$ are the first and second minimum values for the set of VN neighbors of CN $i$, where, $\ell_{max}=|\mathcal{N}(i)|$.
\item \emph{Local Pass}: Let the minimum value as per equation (\ref{eq:cnmsg}) for VN $k_i(\ell)$ be denoted as $q^{min}_{i k_i(\ell)}$, $\ell \in \{1,2,\ldots,|\mathcal{N}(i)|\}$ then,
\begin{align}
q^{min}_{i k_i(\ell)} = \left\{
\begin{array}{ll}
      f^{(\ell_{max})},	& |q_{ik_i(\ell)}| \neq f^{(\ell_{max})} \\
      s^{(\ell_{max})}, 	& otherwise. \\
\end{array}
\right.
\end{align}
\end{enumerate}
In software, this translates to two consecutive for-loops, each executing $(d_{c_i}-1)$ times. Consequently, this reduces the complexity from $\mathcal{O}(d_{c_i}^2)$ to $\mathcal{O}(d_{c_i})$. A similar approach is also found in \cite{valreuse_gunnam}, \cite{laydecarch_sun}. The sign computation is processed in a similar manner.

\begin{table}[ht]
\centering
\setlength{\tabcolsep}{0pt}
\scalebox{0.88}{
\begin{tabular}{l@{\hspace{2pt}}*{8}{c}}
\bfseries  
&$\mathbf{VN_{z\mathbf{J}}}$ &\ldots &$\textcolor{white}{\mathbf{VN_{z(\mathbf{J}-1)}}}$ &$\mathbf{VN_{z\mathbf{J}+l}}$ &$\textcolor{white}{\mathbf{VN_{z(\mathbf{J}+1)-1}}}$ &\ldots &$\mathbf{VN_{z(\mathbf{J}+1)-1}}$ \\
\cmidrule(r){2-8}
\bfseries $\mathbf{NPU_{0}}$
&0 &\ldots &0 &1 &0 &\ldots &0 \\
\cmidrule(r){2-8}
\bfseries $\mathbf{NPU_{1}}$
&0 &\ldots &0 &0 &1 &\ldots &0 \\
\cmidrule(r){2-8}
\bfseries \quad \quad \vdots 
&\vdots & & & & & &\vdots \\
\cmidrule(r){2-8}
\bfseries $\mathbf{NPU_{z-2}}$
&0 &\ldots &0 &0 &0 &\ldots &0 \\
\cmidrule(r){2-8}
\bfseries $\mathbf{NPU_{z-1}}$
&0 &\ldots &1 &0 &0 &\ldots &0 \\
\cmidrule(r){2-8}
\addlinespace \addlinespace
\end{tabular}
}
\caption{Arbitrary submatrix $\mathbf{I}_s$ in $\mathbf{H}$, $0 \leq J \leq n_b-1$, illustrating the opportunity to parallelize $z$ NPUs.}
\label{tab:zunroll}
\end{table}

\subsection{$z$-fold Parallelization of NPUs}
\label{subsec:degzpar}
The CN message computation given by equation (\ref{eq:cnmsg}) is repeated $m$ times in a decoding iteration i.e. once for each CN. A straightforward serial implementation of this kind is slow and undesirable. Instead, we apply a strategy based on the following understanding.
\begin{fact} \label{fac:znpu}
An arbitrary submatrix $\mathbf{I}_s$ in the PCM $\mathbf{H}$ corresponds to $z$ CNs connected to $z$ VNs on the bipartite graph, with strictly $1$ edge between each CN and VN.
\end{fact}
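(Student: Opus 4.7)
The plan is to unpack the definitions of $\mathbf{H}$ and the Tanner graph given in Section \ref{sec:qcldpc}, and to observe that the cyclically shifted identity $\mathbf{I}_s$ is a permutation matrix. First I would fix the submatrix in question: suppose $\mathbf{I}_s$ sits at position $(i_b, j_b)$ of the base matrix, so in $\mathbf{H}$ it occupies the row index set $\mathcal{R} = \{(i_b - 1)z + 1, \ldots, i_b z\}$ and the column index set $\mathcal{C} = \{(j_b - 1)z + 1, \ldots, j_b z\}$. By the construction in Section \ref{subsec:qcldpc}, these indices address $z$ consecutive CNs and $z$ consecutive VNs respectively.

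Next I would invoke the structural property of $\mathbf{I}_s$: because it is obtained by cyclically right-shifting the $z \times z$ identity by $s$ positions, it is a permutation matrix. In particular, every row of $\mathbf{I}_s$ has exactly one entry equal to $1$ and every column of $\mathbf{I}_s$ has exactly one entry equal to $1$; all other entries are $0$. Explicitly, the unique $1$ in row $r \in \{1,\ldots,z\}$ of $\mathbf{I}_s$ lies in column $((r - 1 + s) \bmod z) + 1$, which establishes a bijection between the row indices of $\mathcal{R}$ and the column indices of $\mathcal{C}$.

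Now I would translate this back to the Tanner graph using the incidence interpretation: an edge exists between CN $i \in \mathcal{R}$ and VN $j \in \mathcal{C}$ if and only if $\mathbf{H}(i,j) = 1$. Combined with the permutation property, this yields exactly one edge incident to each CN in $\mathcal{R}$ whose other endpoint lies in $\mathcal{C}$, and symmetrically exactly one edge incident to each VN in $\mathcal{C}$ whose other endpoint lies in $\mathcal{R}$. Hence the subgraph induced on $\mathcal{R} \cup \mathcal{C}$ by the edges recorded inside $\mathbf{I}_s$ is a perfect matching with $z$ edges, which is precisely the statement of Fact \ref{fac:znpu}.

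There is essentially no technical obstacle here; the argument is a definition chase. The only point that requires a modicum of care is to distinguish the \emph{total} degree of a CN or VN in the full graph from its degree restricted to a single submatrix block. Since the fact is stated locally (per submatrix), we must explicitly scope the edge counting to the pairs $(i,j) \in \mathcal{R} \times \mathcal{C}$ induced by $\mathbf{I}_s$ alone, and not conflate these with contributions from the other blocks sharing the same block-row or block-column of $\mathbf{H}_b$.
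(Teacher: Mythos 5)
Your proof is correct and is simply the careful formalization of what the paper treats as immediate from the construction: $\mathbf{I}_s$ is a cyclically shifted identity, hence a permutation matrix, so the block induces a perfect matching of $z$ edges between its $z$ CNs and $z$ VNs (the paper merely states this as a Fact and illustrates it with Table~\ref{tab:zunroll}). Your added care in scoping the edge count to the single block, as opposed to the nodes' total degrees in $\mathbf{H}$, is exactly the right caveat and matches the paper's intent.
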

This implies that no CN in this set of $z$ CNs given by $\mathbf{I}_s$ shares a VN with another CN in the same set. Table \ref{tab:zunroll} illustrates such an arbitrary submatrix in $\mathbf{H}$. This presents us with an opportunity to operate $z$ NPUs in parallel (hereafter referred to as an \emph{NPU array}), resulting in a $z$-fold increase in throughput.

\subsection{Layered Decoding}
\label{subsec:layered}
From Remark \ref{rem:algoadv} it is clear that, in the flooding schedule \emph{all} nodes on one side of the bipartite graph can be processed in parallel. Although, such a \emph{fully parallel} implementation may seem as an attractive option for achieving high-throughput performance, it has its own drawbacks. Firstly, it becomes quickly intractable in hardware due to the complex interconnect pattern. Secondly, such an implementation usually restricts itself to a specific code structure. In spite of the 
serial nature of the algorithm in \ref{subsec:msa}, one can process multiple nodes at the same time if the following condition is satisfied.
\begin{fact} \label{fac:depcond}
From the perspective of CN processing, two or more CNs can be processed at the same time (i.e. they are independent of each other) if they do not have one or more VNs (code bits) in common.
\end{fact}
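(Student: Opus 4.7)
The plan is to apply a standard data-dependency argument, in the style of Bernstein's conditions, to the updates performed in Section \ref{subsec:msa} when they are carried out in the row-layered order of Remark \ref{rem:algoadv}. The quantities that persist between CN updates are the APP vector $\{p_j\}$ and the CTV table $\{r_{ij}\}$; the VTC values $q_{ij}$ together with the min/sign intermediates introduced in Section \ref{subsec:npu} are local to the processing of a single CN and can be held in a private register bank attached to its NPU, so they cannot be a source of inter-CN dependence.

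First I would tabulate, for a single CN $i$, the read set and the write set of one round of equations (\ref{eq:vnmsg})--(\ref{eq:app}). The reads are $p_j^{(t-1)}$ and $r_{ij}^{(t-1)}$ for $j\in\mathcal{N}(i)$; the writes are $r_{ij}^{(t)}$ and $p_j^{(t)}$ for the same index set. Next, taking two CNs $i_1\neq i_2$ with $\mathcal{N}(i_1)\cap\mathcal{N}(i_2)=\emptyset$, I would verify the three Bernstein conditions in turn. The $r$-accesses touch disjoint rows of the CTV table because their first index differs; the $p$-accesses touch disjoint entries of the APP vector because the column indices $j$ are confined to the two disjoint neighborhoods; and no NPU-local intermediate of CN $i_1$ can be observed by CN $i_2$. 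Consequently there is no flow, anti-, or output dependence between the two updates, so performing them in parallel yields exactly the same final state $(\{p_j\},\{r_{ij}\})$ as performing them in either serial order, which is what ``independent'' means operationally for the decoder.

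The step that needs the most care is the $p_j$ write-back: in the Litsyn schedule, $p_j^{(t)}$ is committed immediately after the CN that owns the edge, so that a later CN in the same iteration sees the refreshed value. This is precisely why the ``no shared VN'' hypothesis is not only sufficient but tight: if $\mathcal{N}(i_1)\cap\mathcal{N}(i_2)$ contains some common index $j^{\star}$, both updates read and write $p_{j^{\star}}$ (and both compute a $q_{\cdot j^{\star}}$ from it), inducing simultaneous read-after-write and write-after-write hazards on that memory location, which must be serialised. I would record this converse as a short remark immediately after the proof, since it is exactly the observation that motivates the layer-partitioning construction in the remainder of Section \ref{subsec:layered} and the upper bound on pipelining depth advertised in the abstract.
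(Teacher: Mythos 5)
Your argument is correct. The paper itself offers no proof of Fact~\ref{fac:depcond}: it is stated as a self-evident consequence of the decoder's data layout and is then only invoked (in the proofs of Fact~\ref{fac:betadep} and Lemma~\ref{lem:ppl}), so there is no written proof to diverge from. Your Bernstein-style read/write-set analysis is a faithful formalization of exactly the intuition the paper relies on: for one CN $i$ the persistent state touched by equations (\ref{eq:vnmsg})--(\ref{eq:app}) is $\{p_j, r_{ij}: j\in\mathcal{N}(i)\}$, the $r_{ij}$ are disjoint across CNs by their first index, the $q_{ij}$ and min/sign quantities of Section~\ref{subsec:npu} are NPU-local, and hence all inter-CN coupling flows through the APP values $p_j$ of shared VNs; with $\mathcal{N}(i_1)\cap\mathcal{N}(i_2)=\emptyset$ the read and write sets are disjoint and parallel execution coincides with either serial order, which is the operational meaning of ``independent'' here. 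Your closing observation about the hazard on $p_{j^{\star}}$ when a VN is shared is likewise consistent with how the paper uses the contrapositive of this Fact to define dependent layers and to motivate the layer partitioning and pipelining constraints of Sections~\ref{subsec:layered} and~\ref{subsec:ppldecarch}; just note that the Fact as stated claims only sufficiency, so that remark is supplementary rather than required.
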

The row-layering technique used in this work essentially relies on the above condition being satisfied. In terms of $\mathbf{H}$, an arbitrary subset of rows can be processed at the same time provided that, no two or more rows have a $1$ in the same column of $\mathbf{H}$. This subset of rows is termed as a \emph{row-layer} (hereafter referred to as a \emph{layer}). In other words, given a set $\mathcal{L}=\{L_1, L_2, \ldots,L_I\}$ of $I$ layers in $\mathbf{H}$, $\forall u \in \{1,2,\ldots,I\}$ and $\forall i, i^\prime \in L_u$, then, $\mathcal{N}(i) \cap \mathcal{N}(i^\prime)=\phi.$ \\
\indent Observing that, $\sum_{u=1}^{I}|L_u|=m$, in general, $L_u$ can be any subset of rows as long as the rows within each subset satisfy the condition in Fact \ref{fac:depcond}; implying that, $|L_u| \neq |L_{u^\prime}|$, $\forall u, u^\prime \in \{1,2,\ldots,I\}$ is possible.
Owing to the structure of QC-LDPC codes, the choice of $|L_u|$ (and hence $I$) becomes much obvious. Submatrices $\mathbf{I}_s$ in $\mathbf{H}_b$ (with row and column weight of $1$) guarantee that, for the $z$ CNs corresponding to the rows of $\mathbf{I}_s$), always satisfy the condition in Fact \ref{fac:depcond}. Hence, $|L_u|=|L_{u^\prime}|=z$ is chosen. \\
\indent From the VN or column perspective, $|L_u|=z$, $\forall u = \{1,2,\ldots,I\}$ implies that, the columns of $\mathbf{H}$ are also divided into subsets of size $z$ (hereafter referred to as \emph{block columns}) given by the set $\mathcal{B}=\{B_1,B_2,\ldots,B_J\}$, $J=\frac{n}{z}=n_b$.
Observing that VNs belonging to a block column may participate in CN equations across several layers, we further divide the block columns into \emph{blocks}, where a block is the intersection of a layer and a block column. Two or more layers $L_u,L_{u^\prime}$  are said to be \emph{dependent} with respect to the block column $B_w$ if, $\mathbf{H}_b(u,w) \neq -1$ and, $\mathbf{H}_b(u^\prime,w) \neq -1$ and are said to be \emph{independent} otherwise.

\begin{table}[htp]
\centering
\setlength{\tabcolsep}{6pt}
\begin{tabular}{l@{\hspace{4pt}}*{5}{c}}
\bfseries \bf{Layers} $\downarrow$ & \multicolumn{5}{c}{\bfseries Blocks $\longrightarrow$} \\
\cmidrule(l){1-6}
&\ldots &$\mathbf{B_{2}}$ &$\mathbf{B_{3}}$ &$\mathbf{B_{4}}$ &\ldots \\
\midrule
\bfseries $\mathbf{L_{1}}$
&\ldots &$\downarrow$ &$\downarrow$ &$\downarrow$ &\ldots \\
\midrule 
\bfseries $\mathbf{L_{2}}$
&\ldots &$\downarrow$ &\colorbox{light-gray}{28} &$\downarrow$ &\ldots \\
\midrule
\bfseries $\mathbf{L_{3}}$
&\ldots &$\downarrow$ &$\downarrow$ &$\downarrow$ &\ldots \\
\midrule
\bfseries $\mathbf{L_{4}}$
&\ldots &\colorbox{light-gray}{53} &$\downarrow$ &$\downarrow$ &\ldots \\
\midrule
\bfseries $\mathbf{L_{5}}$
&\ldots &$\downarrow$ &$\downarrow$ &\colorbox{light-gray}{20} &\ldots \\
\midrule
\bfseries $\mathbf{L_{6}}$
&\ldots &$\downarrow$ &$\downarrow$ &$\downarrow$ &\ldots \\
\midrule
\bfseries $\mathbf{L_{7}}$
&\ldots &\colorbox{light-gray}{79} &\colorbox{light-gray}{79} &$\downarrow$ &\ldots \\
\midrule
\bfseries $\mathbf{L_{8}}$
&\ldots &$\downarrow$ &$\downarrow$ &$\downarrow$ &\ldots \\
\midrule
\bfseries $\mathbf{L_{9}}$
&\ldots &$\downarrow$ &$\downarrow$ &$\downarrow$ &\ldots \\
\midrule
\bfseries $\mathbf{L_{10}}$
&\ldots &\colorbox{light-gray}{45} &$\downarrow$ &\colorbox{light-gray}{70} &\ldots \\
\midrule
\bfseries $\mathbf{L_{11}}$
&\ldots &\colorbox{light-gray}{56} &$\downarrow$ &\colorbox{light-gray}{57} &\ldots \\
\midrule
\bfseries $\mathbf{L_{12}}$
&\ldots &$\downarrow$ &\colorbox{light-gray}{61} &$\downarrow$ &\ldots \\
\midrule
\bfseries 
& &to $L_4$ &to $L_2$ &to $L_5$ & \\
\addlinespace \addlinespace
\end{tabular}
\caption{Illustration of Message Passing in row-layered decoding in a Section of the PCM $\mathbf{H}_b$.}
\label{tab:betadep}
\end{table}

For example, in Table \ref{tab:betadep} we can see that layers $L_4,L_7,L_{10}$ and $L_{11}$ are dependent with respect to block column $B_2$. Assuming that the message update begins with layer $L_1$ and proceeds downward, the arrows represent the directional flow of message updates from one layer to another. Thus, layer $L_7$ cannot begin updating the VNs associated with block column $B_2$ before layer $L_4$ has finished updating messages for the same set of VNs and so on. The idea of parallelizing z NPUs seen in Section \ref{subsec:degzpar} can be extended to layers, NPU arrays can process message updates for multiple independent layers. It is clear that, dependent layers limit the degree of parallelization available to achieve high-throughput. In Section \ref{subsec:ppldecarch}, we discuss pipelining methods that allow us to overcome layer-to-layer dependency and improve throughput.

\subsection{Compact Representation of $\mathbf{H}_b$}
\label{subsec:beta}
Before we discuss the pipelined processing of layers, we present a novel compact (thus efficient) matrix representation leading to a significant improvement in throughput. To understand this, let us call $\mathbf{0}$ submatrices in $\mathbf{H}$ as \emph{invalid} blocks, where there are no edges between the corresponding CNs and VNs, and the submatrices $\mathbf{I}_s$ as \emph{valid} blocks. In a conventional approach to scheduling (for example in \cite{htqcldpcdec_zhang}), message computation is done for all the valid and invalid blocks. To avoid processing invalid blocks, we propose an alternate representation of $\mathbf{H}_b$ in the form of two matrices: $\boldsymbol\beta_I$ (Table \ref{tab:betai}), the block index matrix and $\boldsymbol\beta_S$ (Table \ref{tab:betas}), the block shift matrix.
\begin{table}[htp] 
\centering
\setlength{\tabcolsep}{2.5pt}
\scalebox{1}{
\begin{tabular}{l@{\hspace{4pt}}*{8}{c}}
\bfseries \bf{Layers} $\downarrow$ \quad & \multicolumn{8}{c}{\bfseries Blocks $\longrightarrow$} \\
\cmidrule(l){2-9}
&$\mathbf{b_{1}}$ &$\mathbf{b_{2}}$ &$\mathbf{b_{3}}$ &$\mathbf{b_{4}}$ &$\mathbf{b_{5}}$ &$\mathbf{b_{6}}$ &$\mathbf{b_{7}}$ &$\mathbf{b_{8}}$ \\
\midrule
\bfseries $\mathbf{L_{1}}$
&\colorbox{light-gray}{0} &\colorbox{light-gray}{4} &\colorbox{light-gray}{6} &\colorbox{light-gray}{8} &\colorbox{light-gray}{10} &\colorbox{light-gray}{12} &\colorbox{light-gray}{13} &-1 \\
\midrule 
\bfseries $\mathbf{L_{2}}$
&\colorbox{light-gray}{0} &\colorbox{light-gray}{2} &\colorbox{light-gray}{4} &\colorbox{light-gray}{8} &\colorbox{light-gray}{9} &\colorbox{light-gray}{13} &\colorbox{light-gray}{14} &-1 \\
\midrule
\bfseries $\mathbf{L_{3}}$
&\colorbox{light-gray}{0} &\colorbox{light-gray}{4} &\colorbox{light-gray}{5} &\colorbox{light-gray}{8} &\colorbox{light-gray}{9} &\colorbox{light-gray}{14} &\colorbox{light-gray}{15} &-1 \\
\midrule
\bfseries $\mathbf{L_{4}}$
&\colorbox{light-gray}{0} &\colorbox{light-gray}{1} &\colorbox{light-gray}{4} &\colorbox{light-gray}{7} &\colorbox{light-gray}{8} &\colorbox{light-gray}{15} &\colorbox{light-gray}{16} &-1 \\
\midrule
\bfseries $\mathbf{L_{5}}$
&\colorbox{light-gray}{0} &\colorbox{light-gray}{3} &\colorbox{light-gray}{4} &\colorbox{light-gray}{7} &\colorbox{light-gray}{8} &\colorbox{light-gray}{16} &\colorbox{light-gray}{17} &-1 \\
\midrule
\bfseries $\mathbf{L_{6}}$
&\colorbox{light-gray}{0} &\colorbox{light-gray}{4} &\colorbox{light-gray}{6} &\colorbox{light-gray}{8} &\colorbox{light-gray}{11} &\colorbox{light-gray}{17} &\colorbox{light-gray}{18} &-1 \\
\midrule
\bfseries $\mathbf{L_{7}}$
&\colorbox{light-gray}{0} &\colorbox{light-gray}{1} &\colorbox{light-gray}{2} &\colorbox{light-gray}{6} &\colorbox{light-gray}{8} &\colorbox{light-gray}{12} &\colorbox{light-gray}{18} &\colorbox{light-gray}{19} \\
\midrule
\bfseries $\mathbf{L_{8}}$
&\colorbox{light-gray}{0} &\colorbox{light-gray}{4} &\colorbox{light-gray}{5} &\colorbox{light-gray}{8} &\colorbox{light-gray}{10} &\colorbox{light-gray}{19} &\colorbox{light-gray}{20} &-1 \\
\midrule
\bfseries $\mathbf{L_{9}}$
&\colorbox{light-gray}{0} &\colorbox{light-gray}{4} &\colorbox{light-gray}{5} &\colorbox{light-gray}{8} &\colorbox{light-gray}{11} &\colorbox{light-gray}{20} &\colorbox{light-gray}{21} &-1 \\
\midrule
\bfseries $\mathbf{L_{10}}$
&\colorbox{light-gray}{1} &\colorbox{light-gray}{3} &\colorbox{light-gray}{4} &\colorbox{light-gray}{8} &\colorbox{light-gray}{9} &\colorbox{light-gray}{21} &\colorbox{light-gray}{22} &-1 \\
\midrule
\bfseries $\mathbf{L_{11}}$
&\colorbox{light-gray}{0} &\colorbox{light-gray}{1} &\colorbox{light-gray}{3} &\colorbox{light-gray}{4} &\colorbox{light-gray}{10} &\colorbox{light-gray}{22} &\colorbox{light-gray}{23} &-1 \\
\midrule
\bfseries $\mathbf{L_{12}}$
&\colorbox{light-gray}{0} &\colorbox{light-gray}{2} &\colorbox{light-gray}{4} &\colorbox{light-gray}{7} &\colorbox{light-gray}{8} &\colorbox{light-gray}{11} &\colorbox{light-gray}{12} &\colorbox{light-gray}{23} \\
\bottomrule
\addlinespace
\end{tabular}
}
\caption{Block index matrix $\boldsymbol{\beta_{I}}$ showing the valid blocks (highlighted) to be processed.}
\label{tab:betai}
\end{table}
$\boldsymbol\beta_I$ and $\boldsymbol\beta_S$ hold the index locations and the shift values (and hence the connections between the CNs and VNs) corresponding to \emph{only} the valid blocks in $\mathbf{H}_b$, respectively. Construction of $\boldsymbol\beta_I$ is based on the following definition,
\begin{defin} \label{def:betai}
Construction of $\boldsymbol\beta_I$ is as follows. \\
for $u = \{1,2,\ldots,I\}$ \\
\indent set $w=0$, $j_b=0$ \\
\indent for $j_b=\{1,2,\ldots,n_b\}$ \\ 
\indent \indent $j_b = j_b+1$ \\
\indent \indent if $\mathbf{H}_b(u,j_b) \neq -1$ \\
\indent \indent \indent $w=w+1; \boldsymbol\beta_I(u,w)=j_b; \boldsymbol\beta_S(u,w)=\mathbf{H}_b(u,j_b).$
\end{defin}
To observe the benefit of this alternate representation, let us define the following ratio.
\begin{defin}
Let $\lambda$ denote the compaction ratio, which is the ratio of the number of columns of $\boldsymbol{\beta}_I$ (which is the same for $\boldsymbol{\beta}_S$) to the number of columns of $\mathbf{H}_b$. Hence, $\lambda = \frac{J}{n_b}$. 
\end{defin}
The compaction ratio $\lambda$ is a measure of the compaction achieved by the alternate representation of $H_b$.
Compared to the conventional approach, scheduling as per the $\boldsymbol\beta_I$ and $\boldsymbol\beta_S$ matrices improves throughput by $\frac{1}{\lambda}$ times. In our case study, $\lambda = \frac{8}{24} = \frac{1}{3}$, thus providing a throughput gain of $\frac{1}{\lambda}=3$.

\begin{table}[htp] 
\centering
\setlength{\tabcolsep}{2.5pt}
\begin{tabular}{l@{\hspace{4pt}}*{8}{c}}
\bfseries \bf{Layers} $\downarrow$ \quad & \multicolumn{8}{c}{\bfseries Blocks $\longrightarrow$} \\
\cmidrule(l){2-9}
&$\mathbf{b_{1}}$ &$\mathbf{b_{2}}$ &$\mathbf{b_{3}}$ &$\mathbf{b_{4}}$ &$\mathbf{b_{5}}$ &$\mathbf{b_{6}}$ &$\mathbf{b_{7}}$ &$\mathbf{b_{8}}$ \\
\midrule
\bfseries $\mathbf{L_{1}}$
&\colorbox{light-gray}{57} &\colorbox{light-gray}{50} &\colorbox{light-gray}{11} &\colorbox{light-gray}{50} &\colorbox{light-gray}{79} &\colorbox{light-gray}{1} &\colorbox{light-gray}{0} &-1 \\
\midrule 
\bfseries $\mathbf{L_{2}}$
&\colorbox{light-gray}{3} &\colorbox{light-gray}{28} &\colorbox{light-gray}{0} &\colorbox{light-gray}{55} &\colorbox{light-gray}{7} &\colorbox{light-gray}{0} &\colorbox{light-gray}{0} &-1 \\
\midrule
\bfseries $\mathbf{L_{3}}$
&\colorbox{light-gray}{30} &\colorbox{light-gray}{24} &\colorbox{light-gray}{37} &\colorbox{light-gray}{56} &\colorbox{light-gray}{14} &\colorbox{light-gray}{0} &\colorbox{light-gray}{0} &-1 \\
\midrule
\bfseries $\mathbf{L_{4}}$
&\colorbox{light-gray}{62} &\colorbox{light-gray}{53} &\colorbox{light-gray}{53} &\colorbox{light-gray}{3} &\colorbox{light-gray}{35} &\colorbox{light-gray}{0} &\colorbox{light-gray}{0} &-1 \\
\midrule
\bfseries $\mathbf{L_{5}}$
&\colorbox{light-gray}{40} &\colorbox{light-gray}{20} &\colorbox{light-gray}{66} &\colorbox{light-gray}{22} &\colorbox{light-gray}{28} &\colorbox{light-gray}{0} &\colorbox{light-gray}{0} &-1 \\
\midrule
\bfseries $\mathbf{L_{6}}$
&\colorbox{light-gray}{0} &\colorbox{light-gray}{8} &\colorbox{light-gray}{42} &\colorbox{light-gray}{50} &\colorbox{light-gray}{8} &\colorbox{light-gray}{0} &\colorbox{light-gray}{0} &-1 \\
\midrule
\bfseries $\mathbf{L_{7}}$
&\colorbox{light-gray}{69} &\colorbox{light-gray}{79} &\colorbox{light-gray}{79} &\colorbox{light-gray}{56} &\colorbox{light-gray}{52} &\colorbox{light-gray}{0} &\colorbox{light-gray}{0} &\colorbox{light-gray}{0} \\
\midrule
\bfseries $\mathbf{L_{8}}$
&\colorbox{light-gray}{65} &\colorbox{light-gray}{38} &\colorbox{light-gray}{57} &\colorbox{light-gray}{72} &\colorbox{light-gray}{27} &\colorbox{light-gray}{0} &\colorbox{light-gray}{0} &-1 \\
\midrule
\bfseries $\mathbf{L_{9}}$
&\colorbox{light-gray}{64} &\colorbox{light-gray}{14} &\colorbox{light-gray}{52} &\colorbox{light-gray}{30} &\colorbox{light-gray}{32} &\colorbox{light-gray}{0} &\colorbox{light-gray}{0} &-1 \\
\midrule
\bfseries $\mathbf{L_{10}}$
&\colorbox{light-gray}{45} &\colorbox{light-gray}{70} &\colorbox{light-gray}{0} &\colorbox{light-gray}{77} &\colorbox{light-gray}{9} &\colorbox{light-gray}{0} &\colorbox{light-gray}{0} &-1 \\
\midrule
\bfseries $\mathbf{L_{11}}$
&\colorbox{light-gray}{2} &\colorbox{light-gray}{56} &\colorbox{light-gray}{57} &\colorbox{light-gray}{35} &\colorbox{light-gray}{12} &\colorbox{light-gray}{0} &\colorbox{light-gray}{0} &-1 \\
\midrule
\bfseries $\mathbf{L_{12}}$
&\colorbox{light-gray}{24} &\colorbox{light-gray}{61} &\colorbox{light-gray}{60} &\colorbox{light-gray}{27} &\colorbox{light-gray}{51} &\colorbox{light-gray}{16} &\colorbox{light-gray}{1} &\colorbox{light-gray}{0} \\
\bottomrule
\addlinespace
\end{tabular}
\caption{Block shift matrix $\boldsymbol{\beta_{S}}$ showing the right-shift values for the valid blocks to be processed.}
\label{tab:betas}
\end{table}

\begin{remark}
In the irregular QC-LDPC code in our case study, all layers comprise of $7$ blocks each, except layer $L_7$ and $L_{12}$ which have $8$. With the aim of minimizing hardware complexity by maintaining a static memory-address generation pattern (does not change from layer-to-layer), our implementation assumes regularity in the code. The decoder processes $8$ blocks for each layer of the $\boldsymbol\beta_I$ matrix resulting in some throughput penalty. The results from processing the invalid blocks in $L_7$ and $L_{12}$ are not stored in the memory.
\end{remark}

\subsection{Layer-Pipelined Decoder Architecture}
\label{subsec:ppldecarch}
In Section \ref{subsec:layered} we saw how dependent layers for a block column cannot be processed in parallel. For instance, in $\mathbf{H}_b$ in Table \ref{tab:hb}, VNs associated with the block column $B_1$ participate in CN equations associated with all the layers except layer $L_{10}$, suggesting that there is no scope of parallelization of layer processing at all. This situation is better observed in $\boldsymbol{\beta_I}$ shown in Table \ref{tab:betai}. 
\begin{fact} \label{fac:betadep}
\emph{If a block column of $\boldsymbol{\beta_I}$ has a particular index value appearing in more than one layer, then the layers corresponding to that value are dependent with respect to that block column}.
\end{fact}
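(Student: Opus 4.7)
The plan is to reduce the claim directly to the definition of dependent layers given in Section \ref{subsec:layered} by unpacking the construction of $\boldsymbol{\beta_I}$ in Definition \ref{def:betai}. Concretely, fix a column $w$ of $\boldsymbol{\beta_I}$ and suppose that a particular index value $j_b$ appears in that column at two distinct rows $u$ and $u'$; that is, $\boldsymbol{\beta_I}(u,w) = \boldsymbol{\beta_I}(u',w) = j_b$. The entire argument hinges on a single invariant of the construction: an entry of the form $\boldsymbol{\beta_I}(u,\cdot) = j_b$ is written into $\boldsymbol{\beta_I}$ only inside the \emph{if} branch of Definition \ref{def:betai}, which fires precisely when $\mathbf{H}_b(u, j_b) \neq -1$.

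Applying this invariant to both rows $u$ and $u'$ yields $\mathbf{H}_b(u, j_b) \neq -1$ and $\mathbf{H}_b(u', j_b) \neq -1$ simultaneously, which is exactly the condition under which Section \ref{subsec:layered} declares layers $L_u$ and $L_{u'}$ to be dependent with respect to block column $B_{j_b}$ of $\mathbf{H}$. If three or more layers share the same index value $j_b$ in column $w$ of $\boldsymbol{\beta_I}$, the same argument applied pairwise shows that every pair among them is dependent with respect to $B_{j_b}$, which is the intended reading of the fact.

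There is no real obstacle here: the content of the fact is essentially tautological once the construction is read carefully. The only subtlety worth flagging is notational, namely that $j_b$ is a column index into the uncompacted base matrix $\mathbf{H}_b$ whereas $w$ is a column index into $\boldsymbol{\beta_I}$, so the dependence asserted in the conclusion is with respect to block column $B_{j_b}$ of $\mathbf{H}$ and not with respect to column $w$ of $\boldsymbol{\beta_I}$ in which the coincidence was observed. Once this distinction is made explicit, no further computation is needed, and the same reasoning would in fact go through even if the repeated value $j_b$ appeared in different columns of $\boldsymbol{\beta_I}$ across the two layers.
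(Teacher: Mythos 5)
Your proof is correct and follows essentially the same route as the paper's one-line proof: unpack Definition \ref{def:betai} to see that $\boldsymbol{\beta_I}(u,w)=j_b$ is only written when $\mathbf{H}_b(u,j_b)\neq -1$, and then invoke the dependence condition of Section \ref{subsec:layered}. Your explicit remark distinguishing the column index $w$ of $\boldsymbol{\beta_I}$ from the block column $B_{j_b}$ of $\mathbf{H}$ is a helpful clarification, but it does not change the argument.
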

\begin{proof}
Follows directly by applying Fact \ref{fac:depcond} to Definition \ref{def:betai}.
\end{proof}

In other words, $\forall u,u^\prime \in \{1,2,\ldots,I\}$, $\forall w \in \{1,2,\ldots,J\}$, if,
$\boldsymbol{\beta_I}(u,w) = \boldsymbol{\beta_I}(u^\prime,w)$
then, the layers $L_u$ and $L_{u^\prime}$ are dependent.
It is obvious that, to process all layers in parallel ($L_{1}$ to $L_{12}$ in \ref{tab:hb}), the condition,
\begin{align} \label{eq:indlaycond}
\boldsymbol{\beta_I}(u,w) \neq \boldsymbol{\beta_I}(u^\prime,w)
\end{align}
must hold for $\forall u,u^\prime \in \{1,2,\ldots,I\}$. 
We call the set of layers $\mathcal{L}$ satisfying Fact \ref{fac:betadep} as a \emph{superlayer}. As will be seen later, the formation of superlayers of suitable size is crucial to achieve parallelism in the architecture.

\begin{table}[htp] 
\centering
\setlength{\tabcolsep}{2.5pt}
\scalebox{1}{
\begin{tabular}{l@{\hspace{4pt}}*{8}{c}}
\bfseries \bf{Layers} $\downarrow$ \quad & \multicolumn{8}{c}{\bfseries Blocks $\longrightarrow$} \\
\cmidrule(l){2-9}
&$\mathbf{b_{1}}$ &$\mathbf{b_{2}}$ &$\mathbf{b_{3}}$ &$\mathbf{b_{4}}$ &$\mathbf{b_{5}}$ &$\mathbf{b_{6}}$ &$\mathbf{b_{7}}$ &$\mathbf{b_{8}}$ \\
\midrule
\bfseries $\mathbf{L_{1}}$
&\colorbox{light-gray}{\textcolor{blue}{0}} &\colorbox{light-gray}{\textcolor{blue}{4}} &\colorbox{light-gray}{8} &\colorbox{light-gray}{13} &\colorbox{light-gray}{6} &\colorbox{light-gray}{10} &\colorbox{light-gray}{12} &-1 \\
\midrule 
\bfseries $\mathbf{L_{2}}$
&\colorbox{light-gray}{9} &\colorbox{light-gray}{\textcolor{blue}{0}} &\colorbox{light-gray}{\textcolor{blue}{4}} &\colorbox{light-gray}{\textcolor{blue}{8}} &\colorbox{light-gray}{13} &\colorbox{light-gray}{14} &\colorbox{light-gray}{2} &-1 \\
\midrule
\bfseries $\mathbf{L_{3}}$
&\colorbox{light-gray}{15} &\colorbox{light-gray}{9} &\colorbox{light-gray}{\textcolor{blue}{0}} &\colorbox{light-gray}{\textcolor{blue}{4}} &\colorbox{light-gray}{\textcolor{blue}{8}} &\colorbox{light-gray}{5} &\colorbox{light-gray}{14} &-1 \\
\midrule
\bfseries $\mathbf{L_{4}}$
&\colorbox{light-gray}{7} &\colorbox{light-gray}{15} &\colorbox{light-gray}{16} &\colorbox{light-gray}{\textcolor{blue}{0}} &\colorbox{light-gray}{\textcolor{blue}{4}} &\colorbox{light-gray}{\textcolor{blue}{8}} &\colorbox{light-gray}{1} &-1 \\
\midrule
\bfseries $\mathbf{L_{5}}$
&\colorbox{light-gray}{17} &\colorbox{light-gray}{7} &\colorbox{light-gray}{3} &\colorbox{light-gray}{16} &\colorbox{light-gray}{\textcolor{blue}{0}} &\colorbox{light-gray}{4} &\colorbox{light-gray}{\textcolor{blue}{8}} &-1 \\
\midrule
\bfseries $\mathbf{L_{6}}$
&\colorbox{light-gray}{6} &\colorbox{light-gray}{17} &\colorbox{light-gray}{18} &\colorbox{light-gray}{11} &-1 &\colorbox{light-gray}{0} &\colorbox{light-gray}{4} &\colorbox{light-gray}{8} \\
\midrule
\bfseries $\mathbf{L_{7}}$
&\colorbox{light-gray}{19} &\colorbox{light-gray}{6} &\colorbox{light-gray}{\textcolor{blue}{0}} &\colorbox{light-gray}{\textcolor{blue}{8}} &\colorbox{light-gray}{1} &\colorbox{light-gray}{2} &\colorbox{light-gray}{18} &\colorbox{light-gray}{12} \\
\midrule
\bfseries $\mathbf{L_{8}}$
&\colorbox{light-gray}{4} &\colorbox{light-gray}{19} &\colorbox{light-gray}{5} &\colorbox{light-gray}{\textcolor{blue}{\textcolor{blue}{0}}} &\colorbox{light-gray}{\textcolor{blue}{8}} &\colorbox{light-gray}{20} &\colorbox{light-gray}{\textcolor{blue}{10}} &-1 \\
\midrule
\bfseries $\mathbf{L_{9}}$
&\colorbox{light-gray}{21} &\colorbox{light-gray}{\textcolor{blue}{4}} &\colorbox{light-gray}{11} &\colorbox{light-gray}{5} &\colorbox{light-gray}{\textcolor{blue}{0}} &\colorbox{light-gray}{\textcolor{blue}{8}} &\colorbox{light-gray}{20} &-1 \\
\midrule
\bfseries $\mathbf{L_{10}}$
&\colorbox{light-gray}{1} &\colorbox{light-gray}{21} &\colorbox{light-gray}{\textcolor{blue}{4}} &\colorbox{light-gray}{3} &\colorbox{light-gray}{22} &\colorbox{light-gray}{9} &\colorbox{light-gray}{\textcolor{blue}{8}} &-1 \\
\midrule
\bfseries $\mathbf{L_{11}}$
&\colorbox{light-gray}{\textcolor{blue}{0}} &\colorbox{light-gray}{1} &\colorbox{light-gray}{23} &\colorbox{light-gray}{\textcolor{blue}{4}} &\colorbox{light-gray}{3} &\colorbox{light-gray}{22} &\colorbox{light-gray}{\textcolor{blue}{10}} &-1 \\
\midrule
\bfseries $\mathbf{L_{12}}$
&\colorbox{light-gray}{8} &\colorbox{light-gray}{\textcolor{blue}{0}} &\colorbox{light-gray}{2} &\colorbox{light-gray}{23} &\colorbox{light-gray}{\textcolor{blue}{4}} &\colorbox{light-gray}{12} &\colorbox{light-gray}{7} &\colorbox{light-gray}{11} \\
\bottomrule
\addlinespace
\end{tabular}
}
\caption{Rearranged Block Index Matrix $\boldsymbol\beta_I^\prime$ used for our work, showing the valid blocks (highlighted) to be processed.}
\label{tab:rebeta}
\end{table}
The idea is to rearrange the $\boldsymbol{\beta_I}$ matrix elements from their original order. If $\boldsymbol{\beta_I}(u,w) = \boldsymbol{\beta_I}(u^\prime,w)$, $u < u^\prime$ then \emph{stagger} the execution of $\boldsymbol{\beta_I}(u^\prime,w)$ with respect to $\boldsymbol{\beta_I}(u,w)$ by placing $\boldsymbol{\beta_I}(u^\prime,w)$ in $\boldsymbol{\beta_I^\prime}(u^\prime,w^\prime)$ such that, $w < w^\prime$. To understand how layers are pipelined, let us first look at the non-pipelined case. \\
\indent Without loss of generality, Fig. \ref{fig:pplblk}(a) shows the block-level view of the NPU timing diagram without the pipelining of layers. As seen in Section \ref{subsec:npu}, the GNPU and LNPU operate in tandem and in that order, implying that the LNPU has to wait for the GNPU updates to finish.
The layer-level picture is depicted in Fig. \ref{fig:ppllay}(a). We call this version as the $1x$ version. This idling of the GNPU and LNPU can be avoided by introducing pipelined processing of blocks given by the following Lemma.
\begin{lemma} \label{lem:ppl}
Within a superlayer, while the LNPU processes messages for the blocks $\boldsymbol{\beta^\prime}(u,w)$, the GNPU can process messages for the blocks $\boldsymbol{\beta^\prime}(u+1,w)$, $u=\{1,2,\ldots,|\mathcal{L}|-1\}$ and $w=\{1,2,\ldots,J\}$.
\end{lemma}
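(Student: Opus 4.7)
The plan is to establish the lemma by showing that, within a superlayer, the GNPU processing layer $L_{u+1}$ and the LNPU processing layer $L_u$ never contend for the same VN block in the APP memory at any given cycle, so they can legally execute concurrently on the same pipeline stage.

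First, I would make precise the data that each unit reads and writes per cycle. From the algorithm in Section \ref{subsec:npu}, the GNPU for layer $L_{u+1}$ at cycle $w$ consumes $|q_{ik_i(\ell)}|$ to update the running minima $f^{(\ell)}, s^{(\ell)}$, which via (\ref{eq:vnmsg}) requires reading the APP value $p_j^{(t-1)}$ and the stored CTV message $r_{ij}^{(t-1)}$ for the VN block column $\boldsymbol{\beta_I^\prime}(u+1,w)$. Meanwhile, the LNPU for layer $L_u$ at the same cycle $w$ consumes the finalized $f^{(\ell_{\max})}, s^{(\ell_{\max})}$ of layer $L_u$ and writes the updated CTV $r_{ij}^{(t)}$ and APP $p_j^{(t)}$ via (\ref{eq:app}) to the VN block column $\boldsymbol{\beta_I^\prime}(u,w)$.

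Next, I would invoke the hypothesis. Because $L_u$ and $L_{u+1}$ belong to the same superlayer, the rearranged block index matrix $\boldsymbol{\beta_I^\prime}$ has been constructed (by the staggering rule described just before Fig. \ref{fig:pplblk}) to satisfy the column-wise distinctness property (\ref{eq:indlaycond}). Combined with Fact \ref{fac:betadep}, this forces $\boldsymbol{\beta_I^\prime}(u,w) \neq \boldsymbol{\beta_I^\prime}(u+1,w)$ for every $w \in \{1,\ldots,J\}$. Hence the memory locations read by the GNPU for $L_{u+1}$ and those written by the LNPU for $L_u$ index disjoint VN block columns every cycle, ruling out any read-after-write or write-after-read hazard across the two stages. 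Since the GNPU and LNPU are physically distinct hardware units (Section \ref{subsec:npu}) and their only shared state is the APP/CTV memory, the absence of block-column conflicts is sufficient to license concurrent operation.

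The main obstacle I anticipate is the boundary behavior: for small $w$ the LNPU for $L_u$ may still be waiting for $L_u$'s global pass to finish, and toward the end of a superlayer the GNPU for $L_{u+1}$ may run out of valid blocks. I would handle this by arguing that these start-up and drain cycles only make the conflict condition vacuously satisfied (one of the two units is idle and therefore accesses no memory), so the lemma holds over the stated ranges of $u$ and $w$. I would close by noting that this block-level pipelining is exactly what is depicted in the transition from Fig. \ref{fig:pplblk}(a) to its pipelined counterpart, and that the lemma gives the formal justification for eliminating the GNPU/LNPU idling observed in the $1x$ version of Fig. \ref{fig:ppllay}(a).
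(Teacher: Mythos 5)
Your proposal is correct and takes essentially the same route as the paper: the paper's entire proof is a one-line appeal to the layer-independence condition of Fact \ref{fac:depcond}, and your memory-access analysis is simply a fleshed-out version of that same disjoint-VN argument (the blocks processed concurrently by the GNPU for $L_{u+1}$ and the LNPU for $L_u$ lie in distinct block columns, $\boldsymbol{\beta_I^\prime}(u,w) \neq \boldsymbol{\beta_I^\prime}(u+1,w)$, hence share no VNs), with the boundary cycles noted as vacuous. One minor caution: you invoke the full pairwise distinctness condition (\ref{eq:indlaycond}) across the whole superlayer, which the rearranged $\boldsymbol{\beta_I^\prime}$ of the case study does not literally satisfy (the blue entries mark unresolved dependencies between non-adjacent layers, e.g.\ $L_8$ and $L_{11}$ in block $b_7$); only the adjacent-layer, same-cycle distinctness you actually use --- enforced by the staggering rule $w < w^\prime$ --- is needed, and that staggering (rather than your same-cycle argument alone) is what guarantees the GNPU of $L_{u+1}$ reads already-updated APP values.
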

\begin{proof}
Follows directly from the layer independence condition in Fact \ref{fac:depcond}.
\end{proof} 
\noindent Fig. \ref{fig:pplblk}(c) illustrates the block-level view of this 2-layer pipelining scheme. It is important to note that, the splitting of the NPU process into two parts, namely, the GNPU and the LNPU (that work in tandem) is a necessary condition for Fact \ref{fac:betadep} (and hence Lemma \ref{lem:ppl}) to hold. However, at the boundary of the superlayer the Lemma \ref{lem:ppl} does not hold and pipelining has to be restarted for the next layer as seen in the layer-level view shown in Fig. \ref{fig:ppllay}(c). We call this version as the $2x$ version. This is the classical pipelining overhead. In the following, we impose certain constraints on the size of the superlayers in $\mathbf{H}$.

\begin{defin}
Without loss of generality, the pipelining efficiency $\eta_p$ is the number of layers processed per unit time per NPU array. 
\end{defin}
\noindent For the case of pipelining two layers shown in Fig. \ref{fig:ppllay}(c),
\begin{align}
\eta_p^{(2)} = \frac{|\mathcal{L}|}{|\mathcal{L}|+1}
\end{align}
Thus, we impose the following conditions on $|\mathcal{L}|$:
\begin{enumerate}
\item Since, two layers are processed in the pipeline at any given time, provided that $I$ is even,
\begin{align*}
|\mathcal{L}| \in \mathcal{F}=\{x: x \text{\, is an even factor of \,} I \}.
\end{align*}
It is important to note that, for any value of $|\mathcal{L}| \in \mathcal{F}$, $\mathcal{L}$ must be a superlayer.   
\item Given a QC-LDPC code, $|\mathcal{L}|$ is a constant. This is to facilitate a symmetric pipelining architecture which is a scalable solution.
\item Choice of $|\mathcal{L}|$ should maximize pipelining efficiency $\eta_p$,
\begin{align*}
l^* = \argmax_{|\mathcal{L}| \in \mathcal{F}} \eta_p
\end{align*}   
\end{enumerate}

\emph{Case Study}: Table \ref{tab:rebeta} shows one such rearrangement of $\boldsymbol{\beta_I}$ for the QC-LDPC code for our case study in Table \ref{tab:betai}. Unresolved dependencies are shown in blue in Table \ref{tab:rebeta}. $I=m_b=12$, $\mathcal{F}=\{2,4,6\}$ and, $l^* = \argmax_{|\mathcal{L}| \in \mathcal{F}} \eta_p = 6.$
The rearranged block index matrix $\boldsymbol{\beta_I^\prime}$ is shown in Table \ref{tab:rebeta} and the layer-level view of the pipeline timing diagram for the same is shown in Fig. \ref{fig:ppllay}(d). \\

\begin{figure*}
\centering
\includegraphics[width=1\linewidth]{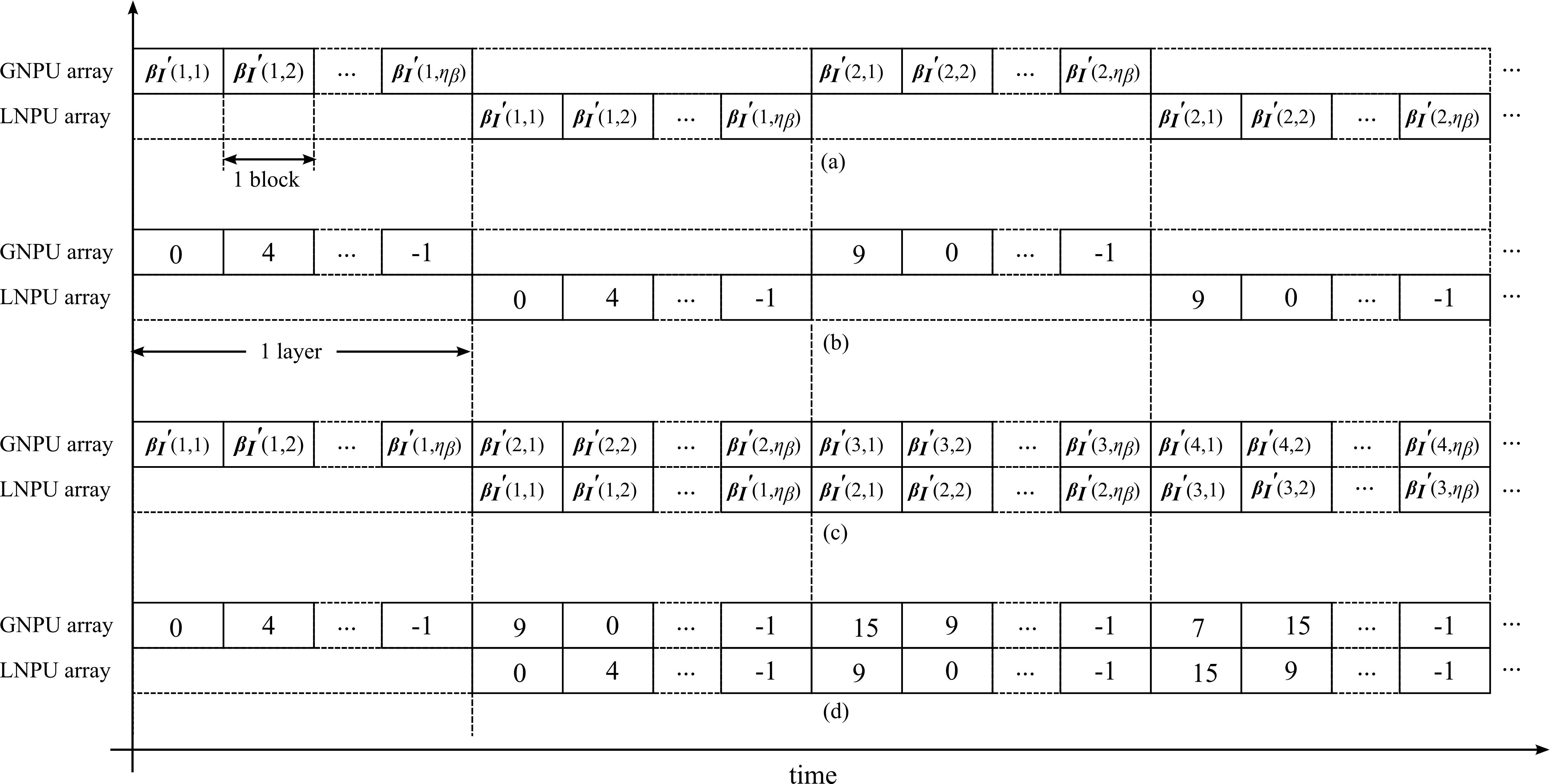}
\caption{Block-level view of the pipeline timing diagram. (a) General case for a circulant-$1$ identity submatrix construction based QC-LDPC code (see Section \ref{sec:qcldpc}) without pipelining.  (b) Special case of the IEEE 802.11n QC-LDPC code used in this work without pipelining (c) Pipelined processing of two layers for the general QC-LDPC code case in (a). (d) Pipelined processing of two layers for the IEEE802.11n QC-LDPC code case in (b).}
\label{fig:pplblk}
\end{figure*}

\begin{figure*}
\centering
\includegraphics[width=1\linewidth]{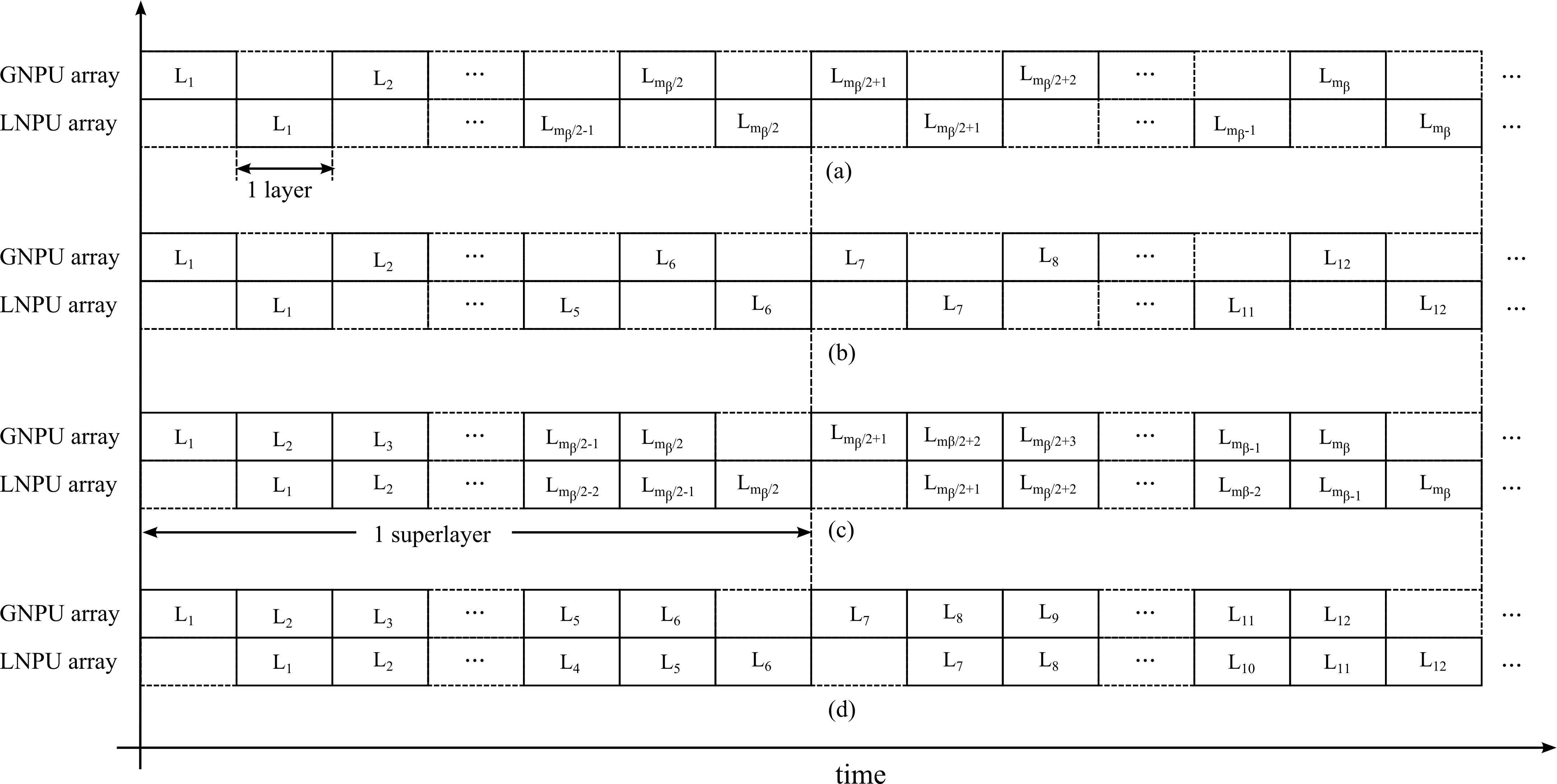}
\caption{Layer-level view of the pipeline timing diagram. (a) General case for a circulant-$1$ identity submatrix construction based QC-LDPC code (see Section \ref{sec:qcldpc}) without pipelining.  (b) Special case of the IEEE 802.11n QC-LDPC code used in this work without pipelining (c) Pipelined processing of two layers for the general QC-LDPC code case in (a). (d) Pipelined processing of two layers for the IEEE802.11n QC-LDPC code case in (b).}
\label{fig:ppllay}
\end{figure*}

\noindent \emph{High-level FPGA-based Decoder Architecture}: The high-level decoder architecture is shown in Fig. \ref{fig:decarch}. The ROM holds the LDPC code parameters specified by the $\boldsymbol\beta_I^\prime$ and the $\boldsymbol\beta_s^\prime$ along with other code parameters such as the block length and the maximum number of decoding iterations. The APP memory is initialized with the channel LLR values corresponding to all the VNs as per equation (\ref{eq:initllr}). The barrel shifter operates on blocks of VNs (APP values in equation (\ref{eq:app})) of size $z \times f$, where $f$ is the fixed-point word length used in the implementation for APP values. It circularly rotates the values to the right by using the shift values from the $\boldsymbol\beta_s^\prime$ matrix in the ROM, effectively implementing the connections between the CNs and VNs. The cyclically shifted APP memory values and the corresponding CN message values for the block in question are fed to the NPU arrays. Here, the GNPUs compute VN messages as per equation (\ref{eq:vnmsg}) and the LNPUs compute CN messages as per equation (\ref{eq:cnmsg}). These messages are then stored back at their respective locations in the RAMs for processing the next block.

\begin{figure*}
\centering
\includegraphics[width=\linewidth]{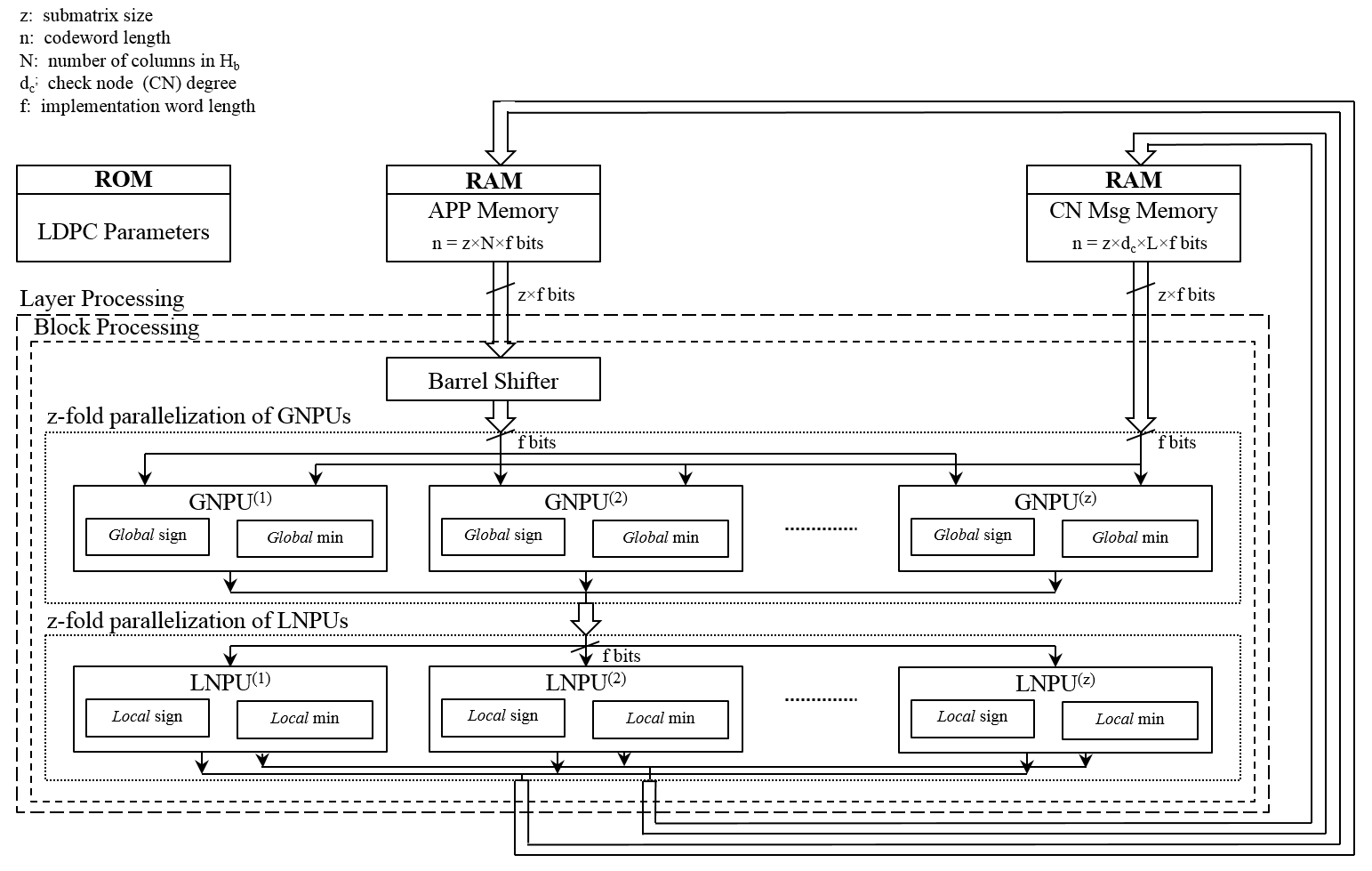}
\caption{High-level decoder architecture.\ showing the z-fold parallelization of the NPUs with an emphasis on the splitting of the sign and the minimum computation given in equation (\ref{eq:cnmsg}). Note that, other computations in equations (\ref{eq:initllr})-(\ref{eq:app}) are not shown for simplicity here. For both the pipelined and the non-pipelined versions, processing schedule for the inner Block Processing loop is as per Fig. \ref{fig:pplblk} and that for the outer Layer Processing loop is as per Fig. \ref{fig:ppllay}.}
\label{fig:decarch}
\end{figure*}

\section{Case Study}
\label{sec:casestudy}
To evaluate the proposed strategies for achieving high-throughput, we have implemented the scaled-MSA based decoder for the QC-LDPC code in the IEEE 802.11n (2012). For this code, $m_b \times n_b = 12 \times 24$, $z=27$, $54$ and $81$ resulting in code lengths of $n=24 \times z=648$, $1296$ and $1944$ bits respectively. Our implementation supports the submatrix size of $z=81$ and hence is capable of supporting all the block lengths for the rate $R=\frac{1}{2}$ code. At the time of writing this paper, we have successfully implemented the two aforementioned versions.
\subsubsection{1x}
\label{subsubsec:1x}
The block-level and the layer-level view of the pipelining is illustrated in Fig. \ref{fig:pplblk}(b) and \ref{fig:ppllay}(b) respectively.
\subsubsection{2x}
\label{subsubsec:2x}
Pipelining is done in software at the algorithmic description level. The block and layer level views of the pipelined processing are shown in Fig. \ref{fig:pplblk}(d) and \ref{fig:ppllay}(d) respectively. With an efficiency $\eta_p^{(2)}=0.86$, the \emph{2x} version is $1.7$ times faster than the \emph{1x} version.

\indent We represent the input LLRs from the channel and the CTV and VTC messages with 6 signed bits and 4 fractional bits. Fig. \ref{fig:ber} shows the bit-error rate (BER) performance for the floating-point and the fixed-point data representation with 8 decoding iterations. As expected, the fixed-point implementation suffers by about 0.5dB compared to the floating point version. The decoder algorithm was described using the \emph{LabVIEW CSDS} software. The \emph{FPGA IP} compiler was then used to generate the VHDL code from the graphical dataflow description. The VHDL code was synthesized, placed and routed using the \emph{Xilinx Vivado} compiler on the \emph{Xilinx Kintex-7} FPGA available on the \emph{NI PXIe-7975R} FPGA board. The decoder core achieves an overall throughput of 608Mb/s at an operating frequency of 200MHz and a latency of 5.7\textmu s. Table \ref{tab:results} shows that the resource usage for the \emph{2x} version (almost twice as fast due to pipelining) is close to that of the \emph{1x} version. The FPGA IP compiler chooses to use more FF for data storage in the \emph{1x} version, while it uses more BRAM in \emph{2x} version.
Compared to a contemporary FPGA-based implementation in \cite{vivadoldpc} using high-level algorithmic description compiled to an HDL, our implementation achieves a higher throughput with relatively lesser resource utilization. Authors of \cite{vivadoldpc} have implemented a decoder for a $R=\frac{1}{2}$, $n=648$, IEEE 802.11n (2012) code that achieves a throughput of $13.4$Mb/s at $122$MHz, utilizes $2\%$ of slice registers, $3\%$ of slice LUTs and $20.9\%$ of Block RAMs on the \emph{Spartan-6 LX150T} FPGA with a comparable BER performance.

\begin{figure}
\centering
\includegraphics[width=\linewidth]{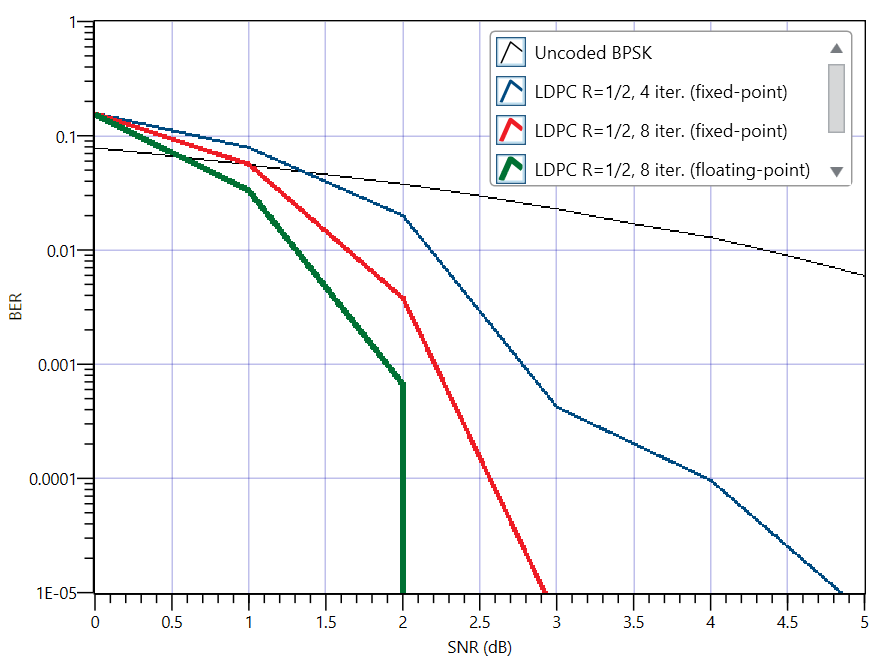}
\caption{Bit Error Rate (BER) performance comparison between uncoded BPSK (rightmost), rate=1/2 LDPC with 4 iterations using fixed-point data representation (second from right), rate=1/2 LDPC with 8 iterations using fixed-point data representation (third from right), rate=1/2 LDPC with 8 iterations using floating-point data representation (leftmost).}
\label{fig:ber}
\end{figure}

\noindent \emph{2.06 Gb/s LDPC Decoder}\cite{globecomm}: An application of this work has been demonstrated in \emph{IEEE GLOBECOM'14} where the QC-LDPC code for our case study was decoded with a throughput of $2.06$ Gb/s. This throughput was achieved by using five decoder cores in parallel on the \emph{Xilinx K7 (410t)} FPGA in the NI USRP-2953R.

\begin{table}[htp]
\centering
\setlength{\tabcolsep}{3.5pt}
\scalebox{1}{
\begin{tabular}{l@{\hspace{4pt}}*{3}{c}}
\midrule
&$\mathbf{1x}$ &$\mathbf{2x}$ \\
\midrule
\bfseries Device
& \emph{Kintex-7k410t} &\emph{Kintex-7k410t} \\
\midrule 
\bfseries Throughput(Mb/s)
&337 &608	 \\
\midrule
\bfseries FF(\%)
&9.1 &5.3 \\
\midrule
\bfseries BRAM(\%)
&4.7 &6.4 \\
\midrule
\bfseries DSP48(\%)
&5.2 &5.2 \\
\midrule
\bfseries LUT(\%)
&8.7 &8.2 \\
\bottomrule
\addlinespace \addlinespace
\end{tabular}
}
\caption{LDPC Decoder IP FPGA Resource Utilization \& Throughput on the Xilinx \emph{Kintex-7 FPGA}.}
\label{tab:results}
\end{table}

\section{Conclusion}
\label{sec:conc}
In this brief we have proposed techniques to achieve high-throughput performance for a MSA based decoder for QC-LDPC codes. The proposed compact representation of the PCM provides significant improvement in throughput. An IEEE 802.11n (2012) decoder is implemented which attains a throughput of 608Mb/s (at 260MHz) and a latency of 5.7\textmu s on the \emph{Xilinx Kintex-7} FPGA. The \emph{FPGA IP} compiler greatly reduces prototyping time and is capable of implementing complex signal processing algorithms. There is undoubtedly more scope for improvement, nevertheless, our current results are promising.

\section*{Acknowledgment}
The authors would like to thank the Department of Electrical \& Computer Engineering, Rutgers University for their continual support for this research work and the LabVIEW FPGA R\&D and the Advanced Wireless Research team in National Instruments for their valuable feedback and support.

\bibliographystyle{IEEEtran}
\bibliography{IEEEabrv,dec_arch_bib}

\end{document}